\newtheorem{definition}{Definition}
\newtheorem{corollary}{Corollary}
\newtheorem{theorem}{Theorem}
\newtheorem{remark}{Remark}
\newcommand{\dsum}{\displaystyle\sum}
\newcommand{\naturals}{\ensuremath{\mathbb{N}}}
\newcommand{\reals}{\ensuremath{\mathbb{R}}}
\newcommand{\expectation}{\ensuremath{\mathbb{E}}}
\begin{document}

\title{\Large{Improved Lower Bounds on the Total Variation Distance for the Poisson Approximation}}

\markboth{Final version. Last updated: July~16, 2013. To appear in the Statistics and Probability
Letters}{I. SASON: Improved Lower Bounds on the Total Variation Distance for the Poisson Approximation}

\author{\IEEEauthorblockN{Igal Sason\\
\hspace*{-0.4cm} Department of Electrical Engineering\\
Technion, Haifa 32000, Israel\\
E-mail: \url{sason@ee.technion.ac.il}\\}}

\maketitle %\thispagestyle{empty}

\begin{abstract}
New lower bounds on the total variation distance between the distribution of a sum
of independent Bernoulli random variables and the Poisson random variable (with the
same mean) are derived via the Chen-Stein method. The new bounds rely on a non-trivial
modification of the analysis by Barbour and Hall (1984) which surprisingly gives a
significant improvement. A use of the new lower bounds is addressed.
\end{abstract}

{\em Keywords:} Chen-Stein method, Poisson approximation, total variation distance.

\section{Introduction}
Convergence to the Poisson distribution, for the number of occurrences of possibly
dependent events, naturally arises in various applications. Following the work of
Poisson, there has been considerable interest in how well the Poisson distribution
approximates the binomial distribution.

The basic idea which serves for the starting point of the so called {\em Chen-Stein method
for the Poisson approximation} is the following (see Chen (1975)). Let $\{X_i\}_{i=1}^n$ be
independent Bernoulli random variables with $\expectation(X_i) = p_i$.
Let $W \triangleq \sum_{i=1}^n X_i$ and
$V_i \triangleq \sum_{j \neq i} X_j$ for every $i \in \{1, \ldots, n\}$, and
$Z \sim \text{Po}(\lambda)$ with mean $\lambda \triangleq \sum_{i=1}^n p_i$.
It is easy to show that
\begin{equation}
\expectation[\lambda f(Z+1) - Z f(Z)] = 0
\label{eq: basic equation of the Chen-Stein method for Poisson approximation}
\end{equation}
holds for an arbitrary bounded function $f: \naturals_0 \rightarrow \reals$
where $\naturals_0 \triangleq \{0, 1, \ldots \}$.
Furthermore (see, e.g., Chapter~2 in Ross and Pek{\"{o}}z (2007))
\begin{equation}
\expectation\bigl[\lambda f(W+1) - W f(W)\bigr]
= \sum_{j=1}^n p_j^2 \, \expectation\bigl[f(V_j+2) - f(V_j+1) \bigr]
\label{eq: first step in the derivation of the improved lower bound on total variation distance}
\end{equation}
which then serves to provide rigorous bounds on the difference between the distributions
of $W$ and $Z$, by the Chen-Stein method for Poisson approximations.
This method, and more generally the so called {\em Stein method}, serves as a powerful
tool for the derivation of rigorous bounds for various distributional approximations. Nice
expositions of this method are provided by, e.g., Arratia et al. (1990),
Ross and Pek{\"{o}}z (2007) and Ross (2011). Furthermore, some interesting
links between the Chen-Stein method and information-theoretic functionals in the
context of Poisson and compound Poisson approximations are
provided by Barbour et al. (2010).

Throughout this letter, the term `distribution' refers to a discrete
probability mass function of an integer-valued random variable. In the following,
we introduce some known results that are related to the presentation of the
new results.
\begin{definition}
Let $P$ and $Q$ be two probability measures defined on a set $\mathcal{X}$.
Then, the total variation distance between $P$ and $Q$ is defined by
\begin{equation}
d_{\text{TV}}(P, Q)  \triangleq \sup_{\text{Borel} \,
A \subseteq \mathcal{X}} \bigl( P(A) - Q(A) \bigr)
\label{eq: total variation distance}
\end{equation}
where the supremum is taken w.r.t. all the Borel subsets $A$ of
$\mathcal{X}$.
If $\mathcal{X}$ is a countable set then \eqref{eq: total variation distance}
is simplified to
\begin{equation}
d_{\text{TV}}(P, Q) = \frac{1}{2} \sum_{x \in \mathcal{X}} |P(x) - Q(x)| =
\frac{||P-Q||_1}{2}
\label{eq: the L1 distance is twice the total variation distance}
\end{equation}
so the total variation distance is equal to half of the $L_1$-distance
between the two probability distributions.
\label{definition: total variation distance}
\end{definition}

Among old and interesting
results that are related to the Poisson approximation, Le Cam's inequality (see Le Cam (1960)) provides
an upper bound on the total variation distance between the distribution of
the sum $W = \sum_{i=1}^n X_i$ of $n$ independent Bernoulli random variables $\{X_i\}_{i=1}^n$,
where $X_i \sim \text{Bern}(p_i)$, and a Poisson distribution $\text{Po}(\lambda)$ with
mean $\lambda = \sum_{i=1}^n p_i$. This inequality states that
$
d_{\text{TV}}\bigl(P_{W}, \text{Po}(\lambda)\bigr) \leq \sum_{i=1}^n p_i^2
$
so if, e.g., $X_i \sim \text{Bern}\bigl(\frac{\lambda}{n}\bigr)$ for every $i \in \{1, \ldots, n\}$
(referring to the case where $W$ is binomially distributed) then this upper bound is equal to
$\frac{\lambda^2}{n}$, decaying to zero as
$n \rightarrow \infty$.
The following theorem combines Theorems~1 and 2 of Barbour and Hall (1984),
and its proof relies on the Chen-Stein method:

\begin{theorem}
Let $W = \sum_{i=1}^n X_i$ be a sum of $n$ independent Bernoulli random
variables with $\expectation(X_i) = p_i$ for $i \in \{1, \ldots, n\}$,
and $\expectation(W) = \lambda$. Then, the total variation distance
between the probability distribution of $W$ and the Poisson
distribution with mean $\lambda$ satisfies
\begin{equation}
\frac{1}{32} \, \Bigl(1 \wedge \frac{1}{\lambda}\Bigr) \, \sum_{i=1}^n p_i^2
\leq d_{\text{TV}}(P_W, \text{Po}(\lambda)) \leq
\left(\frac{1-e^{-\lambda}}{\lambda}\right) \, \sum_{i=1}^n p_i^2
\label{eq: bounds on the total variation distance - Barbour and Hall 1984}
\end{equation}
where $a \wedge b \triangleq \min\{a,b\}$ for every $a, b \in \reals$.
\label{theorem: bounds on the total variation distance - Barbour and Hall 1984}
\end{theorem}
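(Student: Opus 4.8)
The plan is to prove the two inequalities in \eqref{eq: bounds on the total variation distance - Barbour and Hall 1984} separately, both via the identity \eqref{eq: first step in the derivation of the improved lower bound on total variation distance} and using that, with $Z\sim\text{Po}(\lambda)$, the distance in \eqref{eq: total variation distance} equals $\sup_{A\subseteq\naturals_0}\bigl(P_W(A)-P_Z(A)\bigr)=\frac12\|P_W-P_Z\|_1$ by \eqref{eq: the L1 distance is twice the total variation distance}. For the upper bound, fix $A\subseteq\naturals_0$ and let $f_A$ solve the Stein recursion $\lambda f_A(j+1)-j\,f_A(j)=\mathbf{1}_A(j)-P_Z(A)$; replacing $Z$ by $W$ in \eqref{eq: basic equation of the Chen-Stein method for Poisson approximation} turns its left-hand side into $P_W(A)-P_Z(A)$, so \eqref{eq: first step in the derivation of the improved lower bound on total variation distance} gives
\[
P_W(A)-P_Z(A)=\sum_{j=1}^n p_j^2\,\expectation\bigl[f_A(V_j+2)-f_A(V_j+1)\bigr].
\]
The only nonelementary ingredient is the classical Stein-factor estimate $\sup_{k\in\naturals_0}\bigl|f_A(k+1)-f_A(k)\bigr|\le\frac{1-e^{-\lambda}}{\lambda}$ (see Barbour and Hall (1984)), proved by writing $f_A$ explicitly via the Poisson weights and bounding the resulting partial sums; substituting it and taking the supremum over $A$ yields the right inequality.

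For the lower bound the identity \eqref{eq: first step in the derivation of the improved lower bound on total variation distance} is used ``in reverse''. For bounded $f:\naturals_0\to\reals$ put $g(j):=\lambda f(j+1)-j\,f(j)$; then $\expectation[g(Z)]=0$ by \eqref{eq: basic equation of the Chen-Stein method for Poisson approximation}, so
\[
\Bigl|\sum_{j=1}^n p_j^2\,\expectation\bigl[f(V_j+2)-f(V_j+1)\bigr]\Bigr|=\bigl|\expectation[g(W)]-\expectation[g(Z)]\bigr|\le\bigl(\sup g-\inf g\bigr)\,d_{\text{TV}}\bigl(P_W,\text{Po}(\lambda)\bigr).
\]
It thus suffices, for each configuration $(p_i)$, to exhibit a bounded $f$ with $\sup g-\inf g$ of order $1$ (possibly depending on $\lambda$) for which the left-hand side is at least $c\,(1\wedge\frac1\lambda)\sum_jp_j^2$. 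The natural anchor is that the (unbounded) choice $f(j)=j$ gives $f(V_j+2)-f(V_j+1)\equiv1$ and $\expectation[g(W)]=\lambda(\lambda+1)-\expectation[W^2]=\lambda-\mathrm{Var}(W)=\sum_jp_j^2$ exactly, so one must localise this choice so that $f$ becomes bounded while keeping as much as possible of the value $\sum_jp_j^2$. Since $\expectation(V_j)=\lambda-p_j$, $\mathrm{Var}(V_j)\le\lambda$, and likewise for $W$ and $Z$, concentration (Chebyshev's inequality) fixes the scale on which this localisation must take place. For $\lambda\le1$ a window of bounded width suffices — in fact the bound then already follows from $A=\{1,2,\dots\}$, for which $P_W(A)-P_Z(A)=e^{-\lambda}-\prod_i(1-p_i)=\prod_ie^{-p_i}-\prod_i(1-p_i)\ge c\sum_ip_i^2$ by an elementary estimate (with a separate treatment when some $p_i$ is close to $1$, in which case $\sum_ip_i^2$ is already bounded away from $0$). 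For $\lambda>1$ the bulks of $P_W$ and $P_Z$ spread over a window of width $\Theta(\sqrt\lambda)$, so one instead takes $g$ to be a bounded bump of width $\Theta(\sqrt\lambda)$ centred at $\lambda$, normalised to have mean $0$ under $\text{Po}(\lambda)$, with $f$ the corresponding (bounded, by the Stein-factor estimate) solution of $\lambda f(j+1)-j\,f(j)=g(j)$; such a $g$ sees $P_W$ only through the variance deficit $\mathrm{Var}(Z)-\mathrm{Var}(W)=\sum_jp_j^2$ at resolution $\sqrt\lambda$, so $\expectation[g(W)]-\expectation[g(Z)]=\expectation[g(W)]$ equals $\frac1\lambda\sum_jp_j^2$ up to a constant factor and lower-order errors. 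Combining the two regimes and optimising the window width gives the stated constant, which the estimates keep above $\frac1{32}$.

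The main obstacle is the lower bound for $\lambda>1$: one has to build the localised test function so that two competing requirements hold at once — a uniform-in-$j$ lower bound of fixed order on $\expectation[f(V_j+2)-f(V_j+1)]$ together with a bound on $\sup g-\inf g$ that does not blow up. This is subtle because in $g(j)=\lambda\bigl(f(j+1)-f(j)\bigr)+(\lambda-j)f(j)$ the factor $(\lambda-j)$ ranges over an interval of length $\Theta(\sqrt\lambda)$ on the bulk, so boundedness of $g$ forces $f$ to be of size $O(\lambda^{-1/2})$ there and thereby caps what $\expectation[f(V_j+2)-f(V_j+1)]$ can be. Making this rigorous requires (i) the concentration bounds for $W$, $Z$ and the $V_j$'s to control $\expectation[g(W)]-\expectation[g(Z)]$ as a difference of two smooth functionals of near-Poisson laws, (ii) care with the boundary behaviour of $g$, (iii) bookkeeping of the shifts $+1$, $+2$ and of the passage between $W$ and $V_j$, and (iv) the split $\lambda\le1$ versus $\lambda>1$ together with the numerical optimisation that pins the constant down to $\frac1{32}$; everything else is routine.
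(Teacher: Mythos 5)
Your upper-bound half is the standard Chen--Stein argument and coincides with the proof in the cited source (Theorem~1 of Barbour and Hall (1984)): solve the Stein equation for $\mathbf{1}_A - \text{Po}(\lambda)(A)$, apply the identity \eqref{eq: first step in the derivation of the improved lower bound on total variation distance}, and invoke the Stein-factor estimate $\sup_k |f_A(k+1)-f_A(k)| \le (1-e^{-\lambda})/\lambda$. That is fine as a sketch, with the caveat that the Stein factor is the entire nontrivial content of that half and you only gesture at its proof.

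The lower bound contains a genuine gap. Your framework --- inequality \eqref{eq: fourth step in the derivation of the improved lower bound on total variation distance}, i.e.\ $\bigl|\sum_j p_j^2\,\expectation[f(V_j+2)-f(V_j+1)]\bigr| \le 2\, d_{\text{TV}}\,\sup_k|\lambda f(k+1)-kf(k)|$ --- is exactly the right one, and your heuristic (localize $f(j)=j$ at scale $\sqrt{\lambda}$ around $\lambda$) points in the right direction. But the proof \emph{is} the explicit construction that you defer to ``routine bookkeeping and numerical optimisation.'' One must write down a concrete $f$, establish a lower bound of order $1$ on $\expectation[f(V_j+2)-f(V_j+1)]$ uniformly in $j$, and an upper bound of order $\lambda$ on $\sup_k|\lambda f(k+1)-kf(k)|$; the constant $\tfrac{1}{32}$ is read off from these two explicit estimates, not from an optimisation that can be postponed. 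Moreover, the route you propose for making the numerator rigorous --- controlling $\expectation[g(W)]-\expectation[g(Z)]$ by concentration of two ``near-Poisson'' laws and a variance-deficit expansion --- is close to circular, since quantifying the second-order discrepancy between $P_W$ and $\text{Po}(\lambda)$ is essentially the statement being proved. The actual argument (Barbour and Hall; generalized in Section~\ref{subsubsection: Proof of the theorem with the improved lower bound on the total variation distance} of this paper) needs no comparison of laws at all: take $f(k)=(k-\lambda)\exp\bigl(-(k-\lambda)^2/(\theta\lambda)\bigr)$, which is \eqref{eq: proposed f for the derivation of the improved lower bound on the total variation distance} with $\alpha_1=\alpha_2=\lambda$ and, in Barbour--Hall, $\theta=21\max\{1,1/\lambda\}$; write $f(V_j+2)-f(V_j+1)$ as an integral of the derivative, apply the pointwise inequality $(1-2x)e^{-x}\ge 1-3x$ for $x\ge 0$, and then the expectation is computed \emph{exactly} from the first two moments of $V_j$, yielding $\expectation[f(V_j+2)-f(V_j+1)]\ge 1-\frac{3\lambda+7}{\theta\lambda}\ge\frac{11}{21}$ for this $\theta$; the denominator bound $\sup_k|\lambda f(k+1)-kf(k)|\le\lambda\bigl(2e^{-3/2}+\theta e^{-1}\bigr)$ follows from elementary calculus on $(1-2u)e^{-u}$, $ue^{-u^2}$ and $-u^2e^{-u^2}$. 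These are the steps your proposal is missing. Finally, your separate elementary treatment of $\lambda\le 1$ via $A=\{1,2,\dots\}$ is a plausible alternative in that regime, but the sub-case where some $p_i$ is close to $1$ is itself left unproved, whereas the test-function computation above covers all $\lambda$ at once.
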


As a consequence of Theorem~\ref{theorem: bounds on the total variation distance - Barbour and Hall 1984},
it follows that the ratio between the upper and lower bounds in
\eqref{eq: bounds on the total variation distance - Barbour and Hall 1984}
is not larger than~32, irrespectively of the values of $\{p_i\}$. The factor
$\frac{1}{32}$ in the lower bound was claimed to be improvable to $\frac{1}{14}$
with no explicit proof (see Remark~3.2.2 in Barbour et al. (1992)).
This shows that,
for independent Bernoulli random variables, these bounds are essentially tight.
Furthermore, note that the upper bound in
\eqref{eq: bounds on the total variation distance - Barbour and Hall 1984} improves
Le Cam's inequality; for large values of $\lambda$, this improvement is by approximately
a factor of $\frac{1}{\lambda}$.

This letter presents new lower bounds on the total variation distance between the
distribution of a sum of independent Bernoulli random variables and the Poisson random variable
(with the same mean). The derivation of these new bounds generalizes
and improves the analysis by Barbour and Hall (1984), based on the Chen-Stein method
for the Poisson approximation. This letter concludes by outlining a use of the new lower bounds
for the analysis in Sason (2012), followed by a comparison of
the new bounds to previously reported bounds.

This work forms a continuation of the line of work in
Barbour and Chen (2005)--Kontoyiannis et al. (2005)
where the Chen-Stein method was studied in the context of the Poisson and compound
Poisson approximations,
and it was linked to an information-theoretic context by Barbour et al. (2010),
Kontoyiannis et al. (2005), and Sason (2012).

\section{Improved Lower Bounds on the Total Variation Distance}
\label{section: Improved lower bounds on the total variation distance}
In the following, we introduce an improved lower bound on the total variation distance
and then provide a loosened version of this bound that is expressed in closed form.
\begin{theorem}
In the setting of
Theorem~\ref{theorem: bounds on the total variation distance - Barbour and Hall 1984},
the total variation distance between the probability distribution of $W$ and the Poisson
distribution with mean $\lambda$ satisfies the inequality
\begin{equation}
K_1(\lambda) \, \sum_{i=1}^n p_i^2
\leq d_{\text{TV}}(P_W, \text{Po}(\lambda)) \leq
\left(\frac{1-e^{-\lambda}}{\lambda}\right) \sum_{i=1}^n p_i^2
\label{eq: improved lower bound on the total variation distance}
\end{equation}
where
\vspace*{-0.3cm}
\begin{equation}
K_1(\lambda) \triangleq \sup_{\small \begin{array}{ll}
& \alpha_1, \alpha_2 \in \reals, \\
& \alpha_2 \leq \lambda + \frac{3}{2}, \\
& \hspace*{0.4cm} \theta > 0 \\
\end{array}} \left( \frac{1-h_{\lambda}(\alpha_1, \alpha_2, \theta)}{2
\, g_{\lambda}(\alpha_1, \alpha_2, \theta)} \right)
\label{eq: K1 in the lower bound on the total variation distance}
\end{equation}
and
\vspace*{-0.5cm}
\begin{eqnarray}
&& \hspace*{-0.5cm} h_{\lambda}(\alpha_1, \alpha_2, \theta) \triangleq
\frac{3 \lambda + (2-\alpha_2+\lambda)^3 - (1-\alpha_2+\lambda)^3}{\theta \lambda} \nonumber \\
&& \hspace*{2cm} +\frac{|\alpha_1 - \alpha_2| \, \bigl(2 \lambda + |3-2\alpha_2| \bigr)
\, \exp\left(-\frac{(1-\alpha_2)_+^2}{\theta \lambda}\right)}{\theta \lambda}
\label{eq: h in the lower bound on the total variation distance} \\[0.3cm]
&& \hspace*{-0.5cm} x_+ \triangleq \max\{x, 0\}, \quad
x_+^2 \triangleq \bigl(x_+)^2, \quad
\forall \, x \in \reals \\[0.1cm]
&& \hspace*{-0.5cm} g_{\lambda}(\alpha_1, \alpha_2, \theta) \triangleq
\max \left\{ \, \left| \left(1 + \sqrt{\frac{2}{\theta \lambda e}} \cdot
|\alpha_1 - \alpha_2|\right) \lambda + \max_{u_i} \bigl\{ x(u_i) \bigr\} \right|, \right. \nonumber \\[0.3cm]
&& \hspace*{3.2cm} \left. \left| \left(2 e^{-\frac{3}{2}} + \sqrt{\frac{2}{\theta \lambda e}}  \cdot
|\alpha_1 - \alpha_2|\right) \lambda - \min_{u_i} \bigl\{x(u_i)\bigr\} \right| \,
\right\} \label{eq: g in the lower bound on the total variation distance} \\[0.1cm]
&& \hspace*{-0.5cm} x(u) \triangleq (c_0 + c_1 u + c_2 u^2) \, \exp(-u^2),
\quad \forall \, u \in \reals \label{eq: function x} \\[0.2cm]
&& \hspace*{-0.5cm} \{u_i\} \triangleq \Bigl\{ u \in \reals:
\, 2 c_2 u^3 + 2 c_1 u^2 - 2(c_2 - c_0) u - c_1 = 0\Bigr\}
\label{eq: zeros of a cubic polynomial equation} \\[0.1cm]
&& \hspace*{-0.5cm} c_0 \triangleq (\alpha_2 - \alpha_1) (\lambda - \alpha_2)
\label{eq: c0} \\
&& \hspace*{-0.5cm} c_1 \triangleq \sqrt{\theta \lambda}
\, (\lambda + \alpha_1 - 2 \alpha_2) \label{eq: c1} \\
&& \hspace*{-0.5cm} c_2 \triangleq -\theta \lambda.  \label{eq: c2}
\end{eqnarray}
\label{theorem: improved lower bound on the total variation distance}
\end{theorem}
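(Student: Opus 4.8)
The upper bound in \eqref{eq: improved lower bound on the total variation distance} is already supplied by Theorem~\ref{theorem: bounds on the total variation distance - Barbour and Hall 1984}, so only the lower bound requires proof. The plan is to run the Chen--Stein method through the exact identity \eqref{eq: first step in the derivation of the improved lower bound on total variation distance}, but instead of a single fixed test function I would use the three‑parameter family
\[
f(k) = f_{\alpha_1,\alpha_2,\theta}(k) \triangleq (k-\alpha_1)\exp\!\left(-\frac{(k-\alpha_2)^2}{\theta\lambda}\right), \qquad k\in\naturals_0 ,
\]
and take the supremum over $(\alpha_1,\alpha_2,\theta)$ only at the very end; the case $\alpha_1=\alpha_2$ with $\theta$ fixed recovers a smoothed version of the function behind Theorem~\ref{theorem: bounds on the total variation distance - Barbour and Hall 1984}, and the extra freedom in $\alpha_1,\alpha_2,\theta$ is the ``non‑trivial modification'' mentioned in the abstract. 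The first step is a reduction valid for any bounded $f\colon\naturals_0\to\reals$: setting $g(k)\triangleq\lambda f(k+1)-kf(k)=\lambda\,\Delta f(k)-(k-\lambda)f(k)$ with $\Delta f(k)\triangleq f(k+1)-f(k)$, identity \eqref{eq: basic equation of the Chen-Stein method for Poisson approximation} gives $\expectation[g(Z)]=0$, so with $\mu\triangleq P_W-\text{Po}(\lambda)$ (total mass zero) we get $\expectation[g(W)]=\sum_{k\ge0}g(k)\mu(k)$; recentering $g$ by its mid‑range value and using \eqref{eq: the L1 distance is twice the total variation distance} yields $|\expectation[g(W)]|\le\bigl(\sup_k g(k)-\inf_k g(k)\bigr)\,d_{\text{TV}}(P_W,\text{Po}(\lambda))$, while \eqref{eq: first step in the derivation of the improved lower bound on total variation distance} reads $\expectation[g(W)]=\sum_{j=1}^n p_j^2\,\expectation[\Delta f(V_j+1)]$. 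Hence it suffices to establish, for some admissible $(\alpha_1,\alpha_2,\theta)$, that (i)~$\expectation[\Delta f(V_j+1)]\ge 1-h_\lambda(\alpha_1,\alpha_2,\theta)\ge 0$ for every $j$, and (ii)~$\sup_k g(k)-\inf_k g(k)\le 2\,g_\lambda(\alpha_1,\alpha_2,\theta)$: combining (i), (ii) and the two relations above gives $2g_\lambda\,d_{\text{TV}}(P_W,\text{Po}(\lambda))\ge(1-h_\lambda)\sum_j p_j^2$, and taking the supremum over $(\alpha_1,\alpha_2,\theta)$ produces \eqref{eq: improved lower bound on the total variation distance}--\eqref{eq: K1 in the lower bound on the total variation distance}.

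For step (ii): the substitution $u=(k-\alpha_2)/\sqrt{\theta\lambda}$ turns the nonlinear piece of $g$ into $-(k-\lambda)f(k)=x(u)$, with $x,c_0,c_1,c_2$ exactly as in \eqref{eq: function x}--\eqref{eq: c2}; indeed $c_0+c_1u+c_2u^2=-\theta\lambda\bigl(u-\tfrac{\lambda-\alpha_2}{\sqrt{\theta\lambda}}\bigr)\bigl(u-\tfrac{\alpha_1-\alpha_2}{\sqrt{\theta\lambda}}\bigr)$, and since $x(u)\to0$ as $|u|\to\infty$, the extrema of $x$ over $\reals$ occur at its critical points, which are precisely the real roots $\{u_i\}$ of the cubic \eqref{eq: zeros of a cubic polynomial equation}. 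For the remaining piece, write $\Delta f(k)=\int_0^1 f'(k+t)\,dt$ with $f'(y)=e^{-(y-\alpha_2)^2/(\theta\lambda)}\bigl(1-\tfrac{2(y-\alpha_1)(y-\alpha_2)}{\theta\lambda}\bigr)$; the elementary one‑variable function $s\mapsto\bigl(1-\tfrac{2s^2}{\theta\lambda}\bigr)e^{-s^2/(\theta\lambda)}$ has supremum $1$ and infimum $-2e^{-3/2}$ (attained at $s^2=\tfrac32\theta\lambda$), and the extra term caused by $\alpha_1\neq\alpha_2$ is bounded in modulus by $\tfrac{2|\alpha_1-\alpha_2|}{\theta\lambda}\sup_s|s\,e^{-s^2/(\theta\lambda)}|=\sqrt{2/(\theta\lambda e)}\,|\alpha_1-\alpha_2|$. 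Thus $\sup_k\lambda\Delta f(k)\le\bigl(1+\sqrt{2/(\theta\lambda e)}\,|\alpha_1-\alpha_2|\bigr)\lambda$ and $\inf_k\lambda\Delta f(k)\ge-\bigl(2e^{-3/2}+\sqrt{2/(\theta\lambda e)}\,|\alpha_1-\alpha_2|\bigr)\lambda$; adding $\sup x=\max_{u_i}x(u_i)$ respectively $\inf x=\min_{u_i}x(u_i)$ and combining the two one‑sided bounds reproduces exactly the quantity $2g_\lambda$ of \eqref{eq: g in the lower bound on the total variation distance} as an upper bound for $\sup_k g(k)-\inf_k g(k)$.

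For step (i): write $1-\Delta f(k)=\int_0^1[1-f'(k+t)]\,dt$ and split $1-f'(y)=\bigl(1-e^{-(y-\alpha_2)^2/(\theta\lambda)}\bigr)+\tfrac{2(y-\alpha_1)(y-\alpha_2)}{\theta\lambda}e^{-(y-\alpha_2)^2/(\theta\lambda)}$; applying $1-e^{-z}\le z$, $e^{-z}\le 1$, and — on the relevant range $y\ge1$ — $e^{-(y-\alpha_2)^2/(\theta\lambda)}\le e^{-(1-\alpha_2)_+^2/(\theta\lambda)}$ bounds $1-\Delta f(V_j+1)$ pointwise by an expression whose expectation involves only first‑ and second‑order (factorial) moments of $V_j$. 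For a sum of independent Bernoulli variables these are dominated by the corresponding powers of $\lambda$ (use $\mathrm{Var}(V_j)\le\lambda$, $\expectation V_j=\lambda-p_j\le\lambda$ and $\expectation[V_j(V_j-1)]\le(\lambda-p_j)^2$), and after the bookkeeping one obtains $\expectation[1-\Delta f(V_j+1)]\le h_\lambda(\alpha_1,\alpha_2,\theta)$ uniformly in $j$: the term $\bigl(3\lambda+(2-\alpha_2+\lambda)^3-(1-\alpha_2+\lambda)^3\bigr)/(\theta\lambda)$ is what the $\alpha_1$‑free part contributes, the second term of \eqref{eq: h in the lower bound on the total variation distance} is what the $|\alpha_1-\alpha_2|$‑proportional part contributes, and the constraint $\alpha_2\le\lambda+\tfrac32$ is exactly what keeps $2\lambda+3-2\alpha_2\ge0$ and the one‑sided estimates valid (and leaves room for $1-h_\lambda>0$). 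Assembling (i) and (ii) and maximizing over $(\alpha_1,\alpha_2,\theta)$ finishes the argument.

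I anticipate the main obstacle to be twofold. In step (ii), the delicate part is making the heuristic $\Delta f\approx f'$ rigorous — the integral representation $\Delta f(k)=\int_0^1 f'(k+t)\,dt$ is what does this honestly — while keeping the constants $1$, $2e^{-3/2}$ and $\sqrt{2/(\theta\lambda e)}\,|\alpha_1-\alpha_2|$ sharp, together with confirming boundedness of $g$, which hinges on the cancellation in $g=\lambda\Delta f-(k-\lambda)f$ between a piece of order $\lambda$ and one of order $\theta\lambda$ and on locating the extrema of $x$ via the cubic \eqref{eq: zeros of a cubic polynomial equation}. In step (i), the challenge is to carry out the moment estimates for $V_j$ with enough precision — not via crude moment inequalities — that the resulting $K_1(\lambda)$ \emph{strictly} improves on the factor $\tfrac1{32}\bigl(1\wedge\lambda^{-1}\bigr)$ in Theorem~\ref{theorem: bounds on the total variation distance - Barbour and Hall 1984}; this is precisely where the gain over the original Barbour--Hall analysis is won.
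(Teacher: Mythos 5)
Your proposal is correct and follows essentially the same route as the paper: the same three-parameter test function $f(k)=(k-\alpha_1)\exp\bigl(-(k-\alpha_2)^2/(\theta\lambda)\bigr)$, the same Chen--Stein reduction to bounding $\expectation[f(V_j+2)-f(V_j+1)]$ from below (via $1-e^{-z}\le z$, $e^{-z}\le 1$, the first two moments of $V_j$, and the constraint $\alpha_2\le\lambda+\tfrac32$ to discard the $\sum_j p_j^3$ term) and $\lambda f(k+1)-kf(k)=\lambda[f(k+1)-f(k)]+(\lambda-k)f(k)$ from above (with the extrema of $x(u)$ located at the roots of the cubic). The only cosmetic differences are that you rederive the reduction with the slightly sharper $\sup-\inf$ recentering where the paper uses $2\sup_k|\lambda f(k+1)-kf(k)|$, and that in step~(i) the exact constant $2\lambda+|3-2\alpha_2|$ is most cleanly obtained, as in the paper, by keeping the $|\alpha_1-\alpha_2|$ contribution as a difference of two exponentials and applying the mean-value theorem before taking absolute values.
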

\begin{proof}
See Section~\ref{subsubsection: Proof of the theorem with the improved lower bound on the total variation distance}. The derivation relies on the Chen-Stein method
for the Poisson approximation, and it improves (significantly) the constant in
the lower bound of Theorem~2 of Barbour and Hall (1984).
\end{proof}

\begin{remark}
The upper and lower bounds on the total variation distance in
\eqref{eq: improved lower bound on the total variation distance}
scale like $\sum_{i=1}^n p_i^2$, similarly to the known bounds in
Theorem~\ref{theorem: bounds on the total variation distance -
Barbour and Hall 1984}, but they offer a significant improvement
in their tightness (see
Section~\ref{section: Connection of the improved lower bounds with known results}).
\end{remark}

\vspace*{0.1cm}
\begin{remark}
The cardinality of the set $\{u_i\}$
in \eqref{eq: zeros of a cubic polynomial equation} is equal to~3
(see Section~\ref{subsubsection: Proof of the theorem with the improved lower bound on the total variation distance}).
\label{remark: the size of the set of real zeros is equal to 3}
\end{remark}

\vspace*{0.1cm}
\begin{remark}
The optimization that is required for the computation of $K_1$ in
\eqref{eq: K1 in the lower bound on the total variation distance}
w.r.t. the three parameters $\alpha_1, \alpha_2 \in \reals$ and $\theta \in \reals^+$
is performed numerically.
\end{remark}

\vspace*{0.1cm}
In the following, we introduce a looser lower bound on the total variation
distance as compared to the lower bound in
Theorem~\ref{theorem: improved lower bound on the total variation distance},
but its advantage is that it is expressed in closed form.
Both lower bounds improve (significantly) the lower bound in Theorem~2 of Barbour and Hall (1984).
The following lower bound follows from Theorem~\ref{theorem: improved lower bound on the total variation distance}
via the special choice of $\alpha_1 = \alpha_2 = \lambda$ that is included in the
optimization set for $K_1$ on the right-hand side of
\eqref{eq: K1 in the lower bound on the total variation distance}.
Following this sub-optimal choice, the lower bound in the next corollary
is obtained by a derivation of a closed-form expression for the third free
parameter $\theta \in \reals^+$ (in fact, this was our first step towards
the derivation of an improved lower bound on the total variation distance).

\vspace*{0.1cm}
\begin{corollary}
Under the assumptions in Theorem~\ref{theorem: improved lower bound on the total variation distance},
then
\begin{equation}
\widetilde{K}_1(\lambda) \, \sum_{i=1}^n p_i^2
\leq d_{\text{TV}}(P_W, \text{Po}(\lambda)) \leq
\left(\frac{1-e^{-\lambda}}{\lambda}\right) \sum_{i=1}^n p_i^2
\label{eq: first improvement of the lower bound on the total variation distance}
\end{equation}
where
\begin{eqnarray}
&& \hspace*{-1cm} \widetilde{K}_1(\lambda) \triangleq \frac{e}{2 \lambda} \; \frac{1 - \frac{1}{\theta}
\, \left(3+\frac{7}{\lambda}\right)}{\theta + 2 e^{-1/2}}
\label{eq: tilted K_1 in the lower bound on the total variation distance} \\[0.3cm]
&& \hspace*{-1cm} \theta \triangleq 3 + \frac{7}{\lambda} + \frac{1}{\lambda} \cdot
\sqrt{(3\lambda+7)\bigl[(3+2e^{-1/2}) \lambda + 7\bigr]}.
\label{eq: optimal theta for alpha1 and alpha2 equal to lambda}
\end{eqnarray}
\label{corollary: lower bound on the total variation distance}
\end{corollary}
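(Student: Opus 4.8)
The plan is to specialize Theorem~\ref{theorem: improved lower bound on the total variation distance} to the admissible choice $\alpha_1 = \alpha_2 = \lambda$ and then optimize the remaining free parameter $\theta > 0$ in closed form. First I would substitute $\alpha_1 = \alpha_2 = \lambda$ into the coefficients \eqref{eq: c0}--\eqref{eq: c2}; since $\alpha_2 - \alpha_1 = 0$ and $\lambda - \alpha_2 = 0$ we get $c_0 = 0$ and $c_1 = \sqrt{\theta\lambda}(\lambda + \lambda - 2\lambda) = 0$, while $c_2 = -\theta\lambda$. Hence the function $x(u)$ in \eqref{eq: function x} collapses to $x(u) = -\theta\lambda\, u^2 e^{-u^2}$, whose range over $u \in \reals$ is $[-\theta\lambda/e,\,0]$ (the minimum $-\theta\lambda/e$ is attained at $u = \pm 1$, consistent with Remark~\ref{remark: the size of the set of real zeros is equal to 3}, where the cubic \eqref{eq: zeros of a cubic polynomial equation} reduces to $-2\theta\lambda\, u^3 + 2\theta\lambda\, u = 0$ with roots $u \in \{0, 1, -1\}$). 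Therefore $\max_{u_i} x(u_i) = 0$ and $\min_{u_i} x(u_i) = -\theta\lambda/e$.

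Next I would evaluate $h_\lambda$ and $g_\lambda$ under this substitution. In \eqref{eq: h in the lower bound on the total variation distance}, the term $|\alpha_1 - \alpha_2|$ vanishes, and with $\alpha_2 = \lambda$ one has $2 - \alpha_2 + \lambda = 2$ and $1 - \alpha_2 + \lambda = 1$, so the first summand becomes $\dfr{3\lambda + 8 - 1}{\theta\lambda} = \dfr{3\lambda + 7}{\theta\lambda} = \dfr{1}{\theta}\left(3 + \dfr{7}{\lambda}\right)$; thus $h_\lambda = \frac{1}{\theta}\left(3 + \frac{7}{\lambda}\right)$, which matches the numerator $1 - h_\lambda = 1 - \frac{1}{\theta}\left(3 + \frac{7}{\lambda}\right)$ appearing in \eqref{eq: tilted K_1 in the lower bound on the total variation distance}. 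In \eqref{eq: g in the lower bound on the total variation distance}, again $|\alpha_1 - \alpha_2| = 0$, so the first argument of the maximum is $\bigl|\lambda + 0\bigr| = \lambda$ and the second is $\bigl|2e^{-3/2}\lambda - (-\theta\lambda/e)\bigr| = \lambda\bigl(2e^{-3/2} + \theta e^{-1}\bigr)$. I would then verify that the second argument dominates, i.e. $2e^{-3/2} + \theta/e \geq 1$; this holds because the constraint $1 - \frac{1}{\theta}(3 + 7/\lambda) > 0$ (needed for a nonnegative bound) forces $\theta > 3 + 7/\lambda \geq 3$, and $2e^{-3/2} + 3/e > 1$. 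Hence $g_\lambda = \lambda e^{-1}(\theta + 2e^{-1/2})$, and the bracketed expression in \eqref{eq: K1 in the lower bound on the total variation distance} becomes $\dfr{1 - \frac{1}{\theta}(3 + 7/\lambda)}{2\lambda e^{-1}(\theta + 2e^{-1/2})} = \dfr{e}{2\lambda}\cdot\dfr{1 - \frac{1}{\theta}(3 + 7/\lambda)}{\theta + 2e^{-1/2}}$, which is exactly $\widetilde K_1(\lambda)$ as a function of $\theta$.

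Finally I would maximize $\phi(\theta) \triangleq \dfr{1 - \frac{a}{\theta}}{\theta + b}$ over $\theta > 0$, where $a \triangleq 3 + 7/\lambda$ and $b \triangleq 2e^{-1/2}$. Writing $\phi(\theta) = \dfr{\theta - a}{\theta(\theta + b)}$ and differentiating, $\phi'(\theta) = 0$ gives $\theta(\theta+b) - (\theta - a)(2\theta + b) = 0$, i.e. $\theta^2 - 2a\theta - ab = 0$, whose positive root is $\theta = a + \sqrt{a^2 + ab} = a + \sqrt{a(a+b)}$. Substituting $a = 3 + 7/\lambda = (3\lambda + 7)/\lambda$ and $b = 2e^{-1/2}$ yields $\sqrt{a(a+b)} = \frac{1}{\lambda}\sqrt{(3\lambda+7)\bigl[(3+2e^{-1/2})\lambda + 7\bigr]}$, reproducing \eqref{eq: optimal theta for alpha1 and alpha2 equal to lambda}; plugging this $\theta$ back into $\frac{e}{2\lambda}\phi(\theta)$ gives $\widetilde K_1(\lambda)$ in \eqref{eq: tilted K_1 in the lower bound on the total variation distance}. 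Since $\widetilde K_1(\lambda) \le K_1(\lambda)$ by construction (the triple $(\lambda,\lambda,\theta)$ lies in the optimization set of \eqref{eq: K1 in the lower bound on the total variation distance}), the lower bound in \eqref{eq: first improvement of the lower bound on the total variation distance} follows from \eqref{eq: improved lower bound on the total variation distance}; the upper bound is carried over verbatim from Theorem~\ref{theorem: improved lower bound on the total variation distance}. The only mildly delicate point is confirming which branch of the maximum defining $g_\lambda$ is active and checking that $\phi$ is indeed maximized (not minimized) at the interior critical point — both reduce to the elementary sign checks indicated above, so there is no substantial obstacle.
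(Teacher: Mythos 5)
Your proof is correct and follows essentially the same route as the paper: specialize to $\alpha_1=\alpha_2=\lambda$, reduce $h_\lambda$ and $g_\lambda$ to the stated closed forms, and optimize the resulting ratio over $\theta$. The only differences are cosmetic --- you justify that the second branch of the maximum defining $g_\lambda$ is active via $\theta > 3$ rather than the paper's restriction $\theta \ge e - 2e^{-1/2}$, and your stationarity equation $\theta^2 - 2a\theta - ab = 0$ with $a = 3+\frac{7}{\lambda}$, $b = 2e^{-1/2}$ is the one actually consistent with \eqref{eq: optimal theta for alpha1 and alpha2 equal to lambda} (the paper's displayed quadratic has $e^{-1}$ in the constant term where $e^{-1/2}$ is evidently intended).
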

\begin{proof}
See Section~\ref{subsubsection: Proof of the corollary with the improved lower bound on the total variation distance}.
\end{proof}

\section{Outlook}
\label{section: Applications}
We conclude our discussion in this letter by outlining a use of the new lower bounds
in this work: the use of the new lower bound on the total variation distance for the Poisson
approximation of a sum of independent Bernoulli random variables is
exemplified by Sason (2012). This work introduces new
entropy bounds for discrete random variables via maximal coupling, providing
bounds on the difference between the entropies of two discrete random variables
in terms of the local and total variation distances between their probability mass functions.
The new lower bound on the total variation distance for the Poisson approximation
from this work was involved in the calculation of some improved bounds on the difference
between the entropy of a sum of independent Bernoulli random variables and the entropy
of a Poisson random variable of the same mean. A possible application of the latter
problem is related to getting bounds on the sum-rate capacity of a noiseless
$K$-user binary adder multiple-access channel (see Sason (2012)).

\section{Proofs of the New Bounds}
\label{section: proofs}

\subsection{Proof of Theorem~\ref{theorem: improved lower bound on the
total variation distance}}
\label{subsubsection: Proof of the theorem with the improved lower bound on the total
variation distance}
The proof of Theorem~\ref{theorem: improved lower bound on the total variation distance}
starts similarly to the proof of Theorem~2 of Barbour and Hall (1984). However, it significantly
deviates from the original analysis in order to derive an improved lower bound on the
total variation distance.

Let $\{X_i\}_{i=1}^n$ be independent Bernoulli random variables with $\expectation(X_i) = p_i$.
Let $W \triangleq \sum_{i=1}^n X_i$,
$V_i \triangleq \sum_{j \neq i} X_j$ for every $i \in \{1, \ldots, n\}$, and
$Z \sim \text{Po}(\lambda)$ with mean $\lambda \triangleq \sum_{i=1}^n p_i$.
From the basic equation of the Chen-Stein method, equation
\eqref{eq: basic equation of the Chen-Stein method for Poisson approximation}
holds for an arbitrary bounded function $f: \naturals_0 \rightarrow \reals$.
Furthermore, it follows from the proof of Theorem~2 of Barbour and Hall (1984) that
\begin{eqnarray}
d_{\text{TV}}(P_W, \, \text{Po}(\lambda)) \geq
\frac{ \dsum_{j=1}^n \Bigl\{p_j^2 \, \expectation\bigl[f(V_j+2) - f(V_j+1) \bigr] \Bigr\}}{2 \,
\sup_{k \in \naturals_0} \bigl| \lambda f(k+1) - k f(k) \bigr|}
\label{eq: fourth step in the derivation of the improved lower bound on total variation distance}
\end{eqnarray}
which holds, in general, for an arbitrary bounded function $f: \naturals_0 \rightarrow \reals$.

At this point, we deviate from the proof of Theorem~2 of Barbour and Hall (1984) by generalizing and
refining (in a non-trivial way) the original analysis. The general problem
with the current lower bound in \eqref{eq: fourth step in the derivation of the improved lower
bound on total variation distance} is that it is not calculable in closed form for a given $f$, so one needs
to choose a proper function $f$ and derive a closed-form expression for a lower bound on the right-hand side of
\eqref{eq: fourth step in the derivation of the improved lower bound on total variation distance}.
To this end, let
\begin{equation}
f(k) \triangleq (k-\alpha_1) \, \exp\biggl(-\frac{(k-\alpha_2)^2}{\theta \lambda}\biggr) \, ,
\quad \forall \, k \in \naturals_0
\label{eq: proposed f for the derivation of the improved lower bound on the total variation
distance}
\end{equation}
where $\alpha_1, \alpha_2 \in \reals$ and $\theta \in \reals^+$ are fixed constants (note that
$\theta$ in \eqref{eq: proposed f for the derivation of the improved lower bound on the total variation distance}
needs to be positive for $f$ to be a bounded function). In order
to derive a lower bound on the total variation distance, we calculate a lower bound on the
numerator and an upper bound on the denominator of the right-hand side
of \eqref{eq: fourth step in the derivation of the improved lower bound on total variation distance}
for the function $f$ in \eqref{eq: proposed f for the derivation of the improved lower bound on the
total variation distance}. Referring to the numerator of the right-hand side
of \eqref{eq: fourth step in the derivation of the improved lower bound on total variation
distance} with $f$ in \eqref{eq: proposed f for the derivation of the improved lower bound on the
total variation distance}, for every $j \in \{1, \ldots, n\}$,
\begin{eqnarray}
&& f(V_j+2) - f(V_j+1) \nonumber \\[0.1cm]
&& = \int_{V_j + 1 - \alpha_2}^{V_j + 2 - \alpha_2} \frac{\mathrm{d}}{\mathrm{d}u}
\left( (u+\alpha_2-\alpha_1) \, \exp\Bigl(-\frac{u^2}{\theta \lambda}\Bigr) \right) \,
\mathrm{d}u  \nonumber \\[0.2cm]
&& = \int_{V_j + 1 - \alpha_2}^{V_j + 2 - \alpha_2} \left(1 - \frac{2u
(u+\alpha_2-\alpha_1)}{\theta \lambda} \right) \exp\Bigl(-\frac{u^2}{\theta \lambda}\Bigr) \,
\mathrm{d}u \nonumber \\[0.2cm]
&& = \int_{V_j + 1 - \alpha_2}^{V_j + 2 - \alpha_2} \left(1 - \frac{2 u^2}{\theta \lambda}\right)
\, \exp\Bigl(-\frac{u^2}{\theta \lambda}\Bigr) \, \mathrm{d}u -
\frac{2(\alpha_2 - \alpha_1)}{\theta \lambda} \int_{V_j + 1 - \alpha_2}^{V_j + 2 - \alpha_2}
u \, \exp\Bigl(-\frac{u^2}{\theta \lambda}\Bigr) \, \mathrm{d}u \nonumber \\[0.2cm]
&& = \int_{V_j + 1 - \alpha_2}^{V_j + 2 - \alpha_2} \left(1 - \frac{2 u^2}{\theta \lambda}\right)
\, \exp\Bigl(-\frac{u^2}{\theta \lambda}\Bigr) \, \mathrm{d}u \nonumber \\
&& \hspace*{0.4cm} - (\alpha_2 - \alpha_1) \left[ \exp\biggl(-\frac{(V_j + 2 - \alpha_2)^2}{\theta
\lambda}\biggr) - \exp\biggl(-\frac{(V_j + 1 - \alpha_2)^2}{\theta \lambda}\biggr) \right].
\label{eq: fifth step in the derivation of the improved lower bound on total variation distance}
\end{eqnarray}
We rely in the following on the inequality
$$(1-2x) \, e^{-x} \geq 1-3x, \quad \forall \, x \geq 0.$$
Applying it to the integral on the right-hand side of
\eqref{eq: fifth step in the derivation of the improved lower bound on total variation distance}
gives that
\begin{eqnarray}
&& f(V_j+2) - f(V_j+1) \nonumber \\[0.1cm]
&& \geq \int_{V_j + 1 - \alpha_2}^{V_j + 2 - \alpha_2} \left(1 - \frac{3 u^2}{\theta
\lambda}\right) \, \mathrm{d}u -
(\alpha_2 - \alpha_1) \left[ \exp\biggl(-\frac{(V_j + 2 - \alpha_2)^2}{\theta \lambda}\biggr) -
\exp\biggl(-\frac{(V_j + 1 - \alpha_2)^2}{\theta \lambda}\biggr) \right] \nonumber \\[0.1cm]
&& \geq 1 - \frac{\bigl(V_j + 2 - \alpha_2\bigr)^3 - \bigl(V_j + 1 - \alpha_2\bigr)^3}{\theta
\lambda} \nonumber \\[0.1cm]
&& \hspace*{0.4cm} - \bigl|\alpha_2 - \alpha_1\bigr| \cdot
\left| \exp\biggl(-\frac{(V_j + 2 - \alpha_2)^2}{\theta \lambda}\biggr) -
\exp\biggl(-\frac{(V_j + 1 - \alpha_2)^2}{\theta \lambda}\biggr) \right|.
\label{eq: sixth step in the derivation of the improved lower bound on total variation distance}
\end{eqnarray}
In order to proceed, note that if $x_1, x_2 \geq 0$ then (on the basis of the mean-value theorem
of calculus)
\begin{eqnarray*}
&& | e^{-x_2} - e^{-x_1} | \\
&& = \bigl|e^{-c} \, (x_1 - x_2)\bigr|   \quad \mbox{for some} \; \; c \in [x_1, x_2] \\[0.1cm]
&& \leq e^{-\min\{x_1, x_2\}} \, |x_1 - x_2|
\end{eqnarray*}
which, by applying it to the second term on the right-hand side of
\eqref{eq: sixth step in the derivation of the improved lower bound on total variation distance},
gives that for every $j \in \{1, \ldots, n\}$
\begin{eqnarray}
&&\left| \exp\biggl(-\frac{(V_j + 2 - \alpha_2)^2}{\theta \lambda}\biggr) -
\exp\biggl(-\frac{(V_j + 1 - \alpha_2)^2}{\theta \lambda}\biggr) \right|
\nonumber \\[0.1cm]
&& \leq \exp\left(-\frac{\min\Bigl\{(V_j+2-\alpha_2)^2, \, (V_j+1-\alpha_2)^2
\Bigr\}}{\theta \lambda}\right) \cdot
\left(\frac{(V_j+2-\alpha_2)^2-(V_j+1-\alpha_2)^2}{\theta \lambda}\right) \, .
\label{eq: seventh step in the derivation of the improved lower bound on total variation distance}
\end{eqnarray}
Since $V_j = \sum_{i \neq j} X_i \geq 0$ then
\begin{eqnarray}
&& \min\Bigl\{(V_j+2-\alpha_2)^2, \, (V_j+1-\alpha_2)^2\Bigr\} \nonumber \\[0.1cm]
&& \geq \left\{ \begin{array}{cl}
0 \quad & \mbox{if $\alpha_2 \geq 1$} \\[0.1cm]
(1-\alpha_2)^2 \quad & \mbox{if $\alpha_2 < 1$}
\end{array}
\right. \nonumber \\[0.1cm]
&& = \bigl(1-\alpha_2\bigr)_{+}^2
\label{eq: 8th step in the derivation of the improved lower bound on total variation distance}
\end{eqnarray}
where $$x_+ \triangleq \max\{x,0\}, \quad x_+^2 \triangleq \bigl(x_+ \bigr)^2, \quad
\forall \, x \in \reals.$$ Hence, the combination
of the two inequalities in
\eqref{eq: seventh step in the derivation of the improved lower bound
on total variation distance}--\eqref{eq: 8th step in the derivation
of the improved lower bound on total variation distance} gives that
\begin{eqnarray}
&&\left| \exp\biggl(-\frac{(V_j + 2 - \alpha_2)^2}{\theta \lambda}\biggr) -
\exp\biggl(-\frac{(V_j + 1 - \alpha_2)^2}{\theta \lambda}\biggr) \right|
\nonumber \\[0.1cm]
&& \leq \exp\left(-\frac{(1-\alpha_2)_+^2}{\theta \lambda} \right) \cdot
\left(\frac{\left|(V_j + 2 - \alpha_2)^2 -
(V_j + 1 - \alpha_2)^2 \right|}{\theta \lambda} \right) \nonumber \\[0.1cm]
&& = \exp\left(-\frac{(1-\alpha_2)_+^2}{\theta \lambda} \right) \cdot
\frac{\left|2V_j + 3 - 2 \alpha_2 \right|}{\theta \lambda}
\nonumber \\[0.1cm]
&& \leq \exp\left(-\frac{(1-\alpha_2)_+^2}{\theta \lambda} \right) \cdot
\frac{2V_j + \left|3 - 2 \alpha_2 \right|}{\theta \lambda}
\label{eq: ninth step in the derivation of the improved lower bound on total variation distance}
\end{eqnarray}
and therefore, a combination of the inequalities in
\eqref{eq: sixth step in the derivation of the improved lower bound on total variation distance}
and
\eqref{eq: ninth step in the derivation of the improved lower bound on total variation distance}
gives that
\begin{eqnarray}
&& f(V_j+2) - f(V_j+1) \nonumber \\[0.1cm]
&& \geq 1 - \frac{\bigl(V_j + 2 - \alpha_2\bigr)^3 - \bigl(V_j + 1 - \alpha_2\bigr)^3}{\theta
\lambda} \nonumber \\[0.1cm]
&& \hspace*{0.4cm} - \bigl|\alpha_2 - \alpha_1\bigr| \cdot
\exp\left(-\frac{(1-\alpha_2)_+^2}{\theta \lambda} \right) \cdot
\frac{2V_j + \left|3 - 2 \alpha_2 \right|}{\theta \lambda} \; .
\label{eq: 10th step in the derivation of the improved lower bound on total variation distance}
\end{eqnarray}
Let $U_j \triangleq V_j - \lambda$; then
\begin{eqnarray}
&& f(V_j+2) - f(V_j+1) \nonumber \\[0.1cm]
&& \geq 1 - \frac{\bigl(U_j + \lambda + 2 - \alpha_2\bigr)^3 -
\bigl(U_j + \lambda + 1 - \alpha_2\bigr)^3}{\theta
\lambda} \nonumber \\[0.1cm]
&& \hspace*{0.4cm} - \bigl|\alpha_2 - \alpha_1\bigr| \cdot
\exp\left(-\frac{(1-\alpha_2)_+^2}{\theta \lambda} \right) \cdot
\frac{2U_j + 2\lambda + \left|3 - 2 \alpha_2 \right|}{\theta \lambda} \nonumber \\[0.1cm]
&& = 1 - \frac{3 U_j^2 + 3 \bigl(3-2\alpha_2 + 2 \lambda\bigr) U_j +
(2-\alpha_2+\lambda)^3 - (1-\alpha_2+\lambda)^3}{\theta \lambda} \nonumber \\[0.1cm]
&& \hspace*{0.4cm} - \bigl|\alpha_2 - \alpha_1\bigr| \cdot
\exp\left(-\frac{(1-\alpha_2)_+^2}{\theta \lambda} \right) \cdot
\frac{2U_j + 2\lambda + \left|3 - 2 \alpha_2 \right|}{\theta \lambda} \; .
\label{eq: 11th step in the derivation of the improved lower bound on total variation distance}
\end{eqnarray}
In order to derive a lower bound on the numerator of the right-hand side of
\eqref{eq: fourth step in the derivation of the improved lower bound on total variation distance},
for the function $f$ in \eqref{eq: proposed f for the derivation of the improved lower bound on the
total variation distance}, we need to calculate the expected value of the right-hand side of
\eqref{eq: 11th step in the derivation of the improved lower bound on total variation distance}.
To this end, the first and second moments of $U_j$ are calculated as follows:
\begin{eqnarray}
&& \expectation(U_j) \nonumber \\
&& = \expectation(V_j) - \lambda \nonumber \\
&& = \sum_{i \neq j} p_i - \sum_{i=1}^n p_i \nonumber \\
&& = -p_j \label{eq: first moment of U_j} \\
\text{and} \nonumber \\
&& \expectation(U_j^2) \nonumber \\
&& = \text{Var}(U_j) + \bigl(\expectation(U_j)\bigr)^2 \nonumber \\
&& = \text{Var}(V_j) + p_j^2 \nonumber \\
&& \stackrel{(\text{a})}{=} \sum_{i \neq j} p_i(1-p_i) + p_j^2 \nonumber \\
&& = \sum_{i \neq j} p_i - \sum_{i \neq j} p_i^2 + p_j^2 \nonumber \\
&& = \lambda - p_j - \sum_{i \neq j} p_i^2 + p_j^2.
\label{eq: second moment of U_j}
\end{eqnarray}
where equality~(a) holds since the binary random variables $\{X_i\}_{i=1}^n$
are independent and
$\text{Var}(X_i) = p_i (1-p_i)$. By taking expectations on both sides of
\eqref{eq: 11th step in the derivation of the improved lower bound on total variation distance},
one obtains from \eqref{eq: first moment of U_j} and \eqref{eq: second moment of U_j} that
\begin{eqnarray}
&& \expectation\bigl[f(V_j+2) - f(V_j+1)\bigr] \nonumber \\[0.2cm]
&& \geq 1 - \frac{3 \Bigl(\lambda - p_j - \sum_{i \neq j} p_i^2 + p_j^2\Bigr)
+ 3 \bigl(3-2\alpha_2 + 2 \lambda\bigr) \bigl(-p_j \bigr) +
(2-\alpha_2+\lambda)^3 - (1-\alpha_2+\lambda)^3}{\theta \lambda} \nonumber \\[0.2cm]
&& \hspace*{0.4cm} - \bigl|\alpha_2 - \alpha_1\bigr| \cdot
\exp\left(-\frac{(1-\alpha_2)_+^2}{\theta \lambda} \right) \cdot
\left(\frac{-2p_j + 2\lambda + \left|3 - 2 \alpha_2 \right|}{\theta \lambda} \right)
\nonumber \\[0.2cm]
&& = 1 - \frac{3 \lambda + (2-\alpha_2+\lambda)^3 - (1-\alpha_2+\lambda)^3
- \Bigl[3p_j(1-p_j) + 3 \sum_{i \neq j} p_i^2
+ 3 \bigl(3-2\alpha_2 + 2 \lambda\bigr) p_j \Bigr]}{\theta \lambda} \nonumber \\[0.2cm]
&& \hspace*{0.4cm} - \biggl(\frac{\bigl|\alpha_2 - \alpha_1\bigr| \,
\bigl(2\lambda - 2 p_j+ \left|3 - 2 \alpha_2 \right| \bigr)}{\theta \lambda} \biggr)
\cdot \exp\left(-\frac{(1-\alpha_2)_+^2}{\theta \lambda} \right) \nonumber \\[0.2cm]
&& \geq 1 - \frac{3 \lambda + (2-\alpha_2+\lambda)^3 - (1-\alpha_2+\lambda)^3
- \bigl(9-6\alpha_2 + 6 \lambda\bigr) p_j}{\theta \lambda} \nonumber \\[0.2cm]
&& \hspace*{0.4cm} - \biggl(\frac{\bigl|\alpha_2 - \alpha_1\bigr| \,
\bigl(2\lambda+ \left|3 - 2 \alpha_2 \right| \bigr)}{\theta \lambda} \biggr)
\cdot \exp\left(-\frac{(1-\alpha_2)_+^2}{\theta \lambda} \right) \; .
\label{eq: 12th step in the derivation of the improved lower bound on total variation distance}
\end{eqnarray}
Therefore, from \eqref{eq: 12th step in the derivation of the improved lower bound on total variation distance},
the following lower bound on the right-hand side of
\eqref{eq: fourth step in the derivation of the improved lower bound on total variation distance}
holds
\begin{eqnarray}
&& \hspace*{-1.5cm} \sum_{j=1}^n \Bigl\{ p_j^2 \,
\expectation \bigl[f(V_j+2) - f(V_j+1)\bigr] \Bigr\} \geq
\left( \frac{3\bigl(3-2\alpha_2+2\lambda\bigr)}{\theta \lambda} \right)
\sum_{j=1}^n p_j^3 \nonumber \\
&& \hspace*{-1.2cm} + \left(1-\frac{3 \lambda + (2-\alpha_2+\lambda)^3 - (1-\alpha_2+\lambda)^3
+ |\alpha_1 - \alpha_2| \bigl(2\lambda+|3-2\alpha_2|\bigr)
\exp\left(-\frac{(1-\alpha_2)_+^2}{\theta \lambda} \right)}{\theta \lambda} \right)
\sum_{j=1}^n p_j^2 \, .
\label{eq: 12.5th step in the derivation of the improved lower bound on total variation distance}
\end{eqnarray}
Note that if $\alpha_2 \leq \lambda + \frac{3}{2}$, which is a condition that
is involved in the maximization of
\eqref{eq: K1 in the lower bound on the total variation distance}, then the first term on
the right-hand side of
\eqref{eq: 12.5th step in the derivation of the improved lower bound on total variation distance}
can be removed, and the resulting lower bound on the numerator of the right-hand side of
\eqref{eq: fourth step in the derivation of the improved lower bound on total variation distance}
takes the form
\begin{equation}
\sum_{j=1}^n \Bigl\{ p_j^2 \, \expectation\bigl[f(V_j+2) - f(V_j+1)\bigr] \Bigr\}
\geq \Bigl(1 - h_{\lambda}(\alpha_1, \alpha_2, \theta) \Bigr) \sum_{j=1}^n p_j^2
\label{eq: 13th step in the derivation of the improved lower bound on total variation distance}
\end{equation}
where the function $h_{\lambda}$ is introduced in \eqref{eq: h in the lower bound on the total
variation distance}.

We turn now to deriving an upper bound on the denominator of the right-hand side of
\eqref{eq: fourth step in the derivation of the improved lower bound on total variation distance}.
Therefore, we need to derive a closed-form upper bound on
$\sup_{k \in \naturals_0} \bigl| \lambda \, f(k+1) - k \, f(k) \bigr|$
with the function $f$ in \eqref{eq: proposed f for the derivation of the improved lower
bound on the total variation distance}. For every $k \in \naturals_0$,
\begin{equation}
\lambda \, f(k+1) - k \, f(k) = \lambda \, \bigl[ f(k+1) - f(k) \bigr] + (\lambda - k) \, f(k).
\label{eq: 14th step in the derivation of the improved lower bound on total variation distance}
\end{equation}
In the following, we derive bounds on each of the two terms on the right-hand side of
\eqref{eq: 14th step in the derivation of the improved lower bound on total variation distance},
and we start with the first term.
Let $$t(u) \triangleq (u+\alpha_2-\alpha_1) \, \exp\left(-\frac{u^2}{\theta \lambda}\right),
\quad \forall \, u \in \reals$$
then $f(k) = t(k-\alpha_2)$ for every $k \in \naturals_0$, and by the mean-value theorem of calculus,
\begin{eqnarray}
&& f(k+1) - f(k) \nonumber \\
&& = t(k+1-\alpha_2) - t(k-\alpha_2) \nonumber \\
&& = t'(c_k) \quad \mbox{for some} \; c_k \in [k-\alpha_2, k+1-\alpha_2] \nonumber \\
%&& = \left[1-\frac{2c_k (c_k+\alpha_2-\alpha_1)}{\theta \lambda} \right]
%\exp\left(-\frac{c_k^2}{\theta \lambda}\right) \nonumber \\
&& = \left(1-\frac{2 c_k^2}{\theta \lambda}\right) \, \exp\left(-\frac{c_k^2}{\theta
\lambda}\right) + \left(\frac{2(\alpha_1-\alpha_2) c_k}{\theta \lambda}\right) \,
\exp\left(-\frac{c_k^2}{\theta \lambda}\right) \, .
\label{eq: 15th step in the derivation of the improved lower bound on total variation distance}
\end{eqnarray}
Referring to the first term on the right-hand side of
\eqref{eq: 15th step in the derivation of the improved lower bound on total variation distance},
let $$p(u) \triangleq (1-2u) e^{-u}, \quad \forall \, u \geq 0$$ then the global maximum and
minimum of $p$ over the non-negative real line are obtained at
$u=0$ and $u = \frac{3}{2}$, respectively, and therefore
$$ -2 e^{-\frac{3}{2}} \leq p(u) \leq 1, \quad \forall \, u \geq 0.$$
Let $u = \frac{c_k^2}{\theta \lambda}$; then it follows that the first term on the right-hand side
of \eqref{eq: 15th step in the derivation of the improved lower bound on total variation distance}
satisfies the inequality
\begin{equation}
-2 e^{-\frac{3}{2}} \leq \Bigl(1-\frac{2 c_k^2}{\theta \lambda}\Bigr) \,
\exp\Bigl(-\frac{c_k^2}{\theta \lambda}\Bigr) \leq 1.
\label{eq: 16th step in the derivation of the improved lower bound on total variation distance}
\end{equation}
Furthermore, referring to the second term on the right-hand side of
\eqref{eq: 15th step in the derivation of the improved lower bound on total variation distance},
let $$q(u) \triangleq u e^{-u^2}, \quad \forall \, u \in \reals$$
then the global maximum and minimum of $q$ over the real line are obtained at
$u = +\frac{\sqrt{2}}{2}$ and $u = -\frac{\sqrt{2}}{2}$, respectively, and therefore
$$-\frac{1}{2} \sqrt{\frac{2}{e}} \leq q(u) \leq +\frac{1}{2} \sqrt{\frac{2}{e}} \; ,
\quad \forall \, u \in \reals.$$
Let this time $u = \sqrt{\frac{c_k}{\theta \lambda}}$; then it follows that the second term on the
right-hand side
of \eqref{eq: 15th step in the derivation of the improved lower bound on total variation distance}
satisfies
\begin{equation}
\left| \biggl(\frac{2(\alpha_1 - \alpha_2) c_k}{\theta \lambda} \biggr) \cdot
\exp\biggl(-\frac{c_k^2}{\theta \lambda}\biggr) \right| \leq
\sqrt{\frac{2}{\theta \lambda e}} \cdot |\alpha_1 - \alpha_2|.
\label{eq: 17th step in the derivation of the improved lower bound on total variation distance}
\end{equation}
Hence, on combining the equality in
\eqref{eq: 15th step in the derivation of the improved lower bound on total variation distance}
with the two inequalities in
\eqref{eq: 16th step in the derivation of the improved lower bound on total variation distance}
and
\eqref{eq: 17th step in the derivation of the improved lower bound on total variation distance},
it follows that the first term on the right-hand side of
\eqref{eq: 14th step in the derivation of the improved lower bound on total variation distance}
satisfies
\begin{equation}
-\left(2 \lambda e^{-\frac{3}{2}} + \sqrt{\frac{2\lambda}{\theta e}} \cdot |\alpha_1 -
\alpha_2| \right) \leq \lambda \bigl[f(k+1) - f(k) \bigr]
\leq \lambda + \sqrt{\frac{2\lambda}{\theta e}} \cdot |\alpha_1 - \alpha_2| \, ,
\quad \forall \, k \in \naturals_0.
\label{eq: 18th step in the derivation of the improved lower bound on total variation distance}
\end{equation}
We continue the analysis by a derivation of bounds on the second term of the right-hand side
of \eqref{eq: 14th step in the derivation of the improved lower bound on total variation distance}.
For the function $f$ in \eqref{eq: proposed f for the derivation of the improved lower bound on
the total variation distance}, it is equal to
\begin{eqnarray}
&& (\lambda - k) \, f(k) \nonumber \\
&& = (\lambda-k) (k-\alpha_1) \exp\biggl(-\frac{(k-\alpha_2)^2}{\theta \lambda}\biggr) \nonumber \\
&& = \bigl[ (\lambda-\alpha_2)+(\alpha_2-k) \bigr] \, \bigl[ (k-\alpha_2)+(\alpha_2-\alpha_1)
\bigr] \, \exp\biggl(-\frac{(k-\alpha_2)^2}{\theta \lambda}\biggr) \nonumber \\
&& = \Bigl[ (\lambda-\alpha_2) (k-\alpha_2) +
(\alpha_2-\alpha_1) (\lambda-\alpha_2) - (k-\alpha_2)^2
+ (\alpha_1 - \alpha_2) (k-\alpha_2) \Bigr] \, \exp\biggl(-\frac{(k-\alpha_2)^2}{\theta
\lambda}\biggr) \nonumber \\[0.1cm]
&& = \bigl[ \sqrt{\theta \lambda} \, (\lambda-\alpha_2) \, v_k - \theta \lambda \, v_k^2
- \sqrt{\theta \lambda} \, (\alpha_2 - \alpha_1) \, v_k
+ (\alpha_2 - \alpha_1) (\lambda - \alpha_2) \bigr] \, e^{-v_k^2} \, , \quad
v_k \triangleq \frac{k-\alpha_2}{\sqrt{\theta \lambda}} \; \; \forall \, k \in \naturals_0
\nonumber \\
&& = (c_0 + c_1 v_k + c_2 v_k^2) \, e^{-v_k^2}
\label{eq: 19th step in the derivation of the improved lower bound on total variation distance}
\end{eqnarray}
where the coefficients $c_0, c_1$ and $c_2$ are introduced in
Eqs.~\eqref{eq: c0}--\eqref{eq: c2}, respectively.
In order to derive bounds on the left-hand side of
\eqref{eq: 19th step in the derivation of the improved lower bound on total variation distance},
let us find the global maximum and minimum of the function $x$
in \eqref{eq: function x}:
$$x(u) \triangleq (c_0 + c_1 u + c_2 u^2) e^{-u^2} \, \quad \forall \, u \in \reals.$$
Note that $\lim_{u \rightarrow \pm \infty} x(u) = 0$ and $x$ is differentiable
over the real line, so the global maximum and minimum of $x$ are attained at
finite points and their corresponding values are finite. By setting the derivative of $x$ to zero,
we have that the candidates for the global maximum and minimum of $x$ over the real line are the real zeros
$\{u_i\}$ of the cubic polynomial equation in \eqref{eq: zeros of a cubic polynomial equation}.
Note that by their definition in \eqref{eq: zeros of a cubic polynomial equation}, the values
of $\{u_i\}$ are {\em independent} of the value of $k \in \naturals_0$, and also the
size of the set $\{u_i\}$ is equal to~3 (see Remark~\ref{remark: the size of the set of real zeros
is equal to 3}). Hence, it follows from
\eqref{eq: 19th step in the derivation of the improved lower bound on total variation distance}
that
\begin{equation}
\min_{i \in \{1, 2, 3\}} \{x(u_i)\} \leq (\lambda - k) \, f(k)
\leq \max_{i \in \{1, 2, 3\}} \{x(u_i)\} \, , \quad  \forall \, k \in \naturals_0
\label{eq: 20th step in the derivation of the improved lower bound on total variation distance}
\end{equation}
where these bounds on the second term on the right-hand side of
\eqref{eq: 14th step in the derivation of the improved lower bound on total variation distance}
are independent of the value of $k \in \naturals_0$.

In order to get bounds on the left-hand side of
\eqref{eq: 14th step in the derivation of the improved lower bound on total variation distance},
note that from the bounds on the first and second terms on the right-hand side of
\eqref{eq: 14th step in the derivation of the improved lower bound on total variation distance}
(see
\eqref{eq: 18th step in the derivation of the improved lower bound on total variation distance}
and
\eqref{eq: 20th step in the derivation of the improved lower bound on total variation distance},
respectively) then for every $k \in \naturals_0$
\begin{eqnarray}
&& \min_{i \in \{1, 2, 3\}} \{x(u_i)\} - \left(2 \lambda e^{-\frac{3}{2}} +
\sqrt{\frac{2\lambda}{\theta e}} \cdot |\alpha_1 - \alpha_2| \right) \nonumber \\[0.1cm]
&& \leq \lambda \, f(k+1) - k \, f(k) \nonumber \\[0.1cm]
&& \leq \max_{i \in \{1, 2, 3\}} \{x(u_i)\}
+ \lambda + \sqrt{\frac{2\lambda}{\theta e}} \cdot |\alpha_1 - \alpha_2|
\label{eq: 21st step in the derivation of the improved lower bound on total variation distance}
\end{eqnarray}
which yields that the following inequality is satisfied:
\begin{equation}
\sup_{k \in \naturals_0} \left| \lambda \, f(k+1) - k \, f(k) \right|
\leq g_{\lambda}(\alpha_1, \alpha_2, \theta)
\label{eq: 22nd step in the derivation of the improved lower bound on total variation distance}
\end{equation}
where the function $g_{\lambda}$ is introduced in
\eqref{eq: g in the lower bound on the total variation distance}.
Finally, by combining the inequalities in
Eqs.~\eqref{eq: fourth step in the derivation of the improved lower bound on total variation distance},
\eqref{eq: 13th step in the derivation of the improved lower bound on total variation distance}
and
\eqref{eq: 22nd step in the derivation of the improved lower bound on total variation distance},
the lower bound on the total variation distance in
\eqref{eq: improved lower bound on the total variation distance} follows. The existing
upper bound on the total variation distance in
\eqref{eq: improved lower bound on the total variation distance}
was derived in Theorem~1 of Barbour and Hall (1984) (see
Theorem~\ref{theorem: bounds on the total variation distance - Barbour and Hall 1984} here).
This completes the proof of
Theorem~\ref{theorem: improved lower bound on the total variation distance}.

\vspace*{0.1cm}
\subsection{Proof of Corollary~\ref{corollary: lower bound on the total variation distance}}
\label{subsubsection: Proof of the corollary with the improved lower bound on the total
variation distance}
Corollary~\ref{corollary: lower bound on the total variation distance} follows as a special
case of Theorem~\ref{theorem: improved lower bound on the total variation distance} when
the proposed function $f$ in \eqref{eq: proposed f for the derivation of the improved lower
bound on the total variation distance} is chosen such that two of its three free
parameters (i.e., $\alpha_1$ and $\alpha_2$) are determined sub-optimally, and its third
parameter ($\theta$) is determined optimally in terms of the sub-optimal selection of
the two other parameters. More explicitly, let $\alpha_1$ and $\alpha_2$ in
\eqref{eq: proposed f for the derivation of the improved lower bound on the total variation
distance} be set to be equal to $\lambda$ (i.e., $\alpha_1 = \alpha_2 = \lambda$). From
\eqref{eq: c0}--\eqref{eq: c2}, this setting implies that $c_0 = c_1 = 0$ and
$c_2 = -\theta \lambda < 0$ (since $\theta, \lambda > 0$). The cubic polynomial
equation in \eqref{eq: zeros of a cubic polynomial equation}, which corresponds to this
(possibly sub-optimal) setting of $\alpha_1$ and $\alpha_2$, is $$2 c_2 u^3  - 2 c_2 u = 0$$
whose zeros are $u = 0, \pm 1$. The
function $x$ in \eqref{eq: function x} therefore takes the form
$$x(u) = c_2 u^2 e^{-u^2} \, \quad \forall \, u \in \reals$$
so $x(0)=0$ and $x(\pm 1) = \frac{c_2}{e} < 0$. This implies that
$$\min_{i \in \{1, 2, 3\}} x(u_i) = \frac{c_2}{e}, \quad
\max_{i \in \{1, 2, 3\}} x(u_i) = 0,$$ and therefore $h_{\lambda}$ and
$g_{\lambda}$ in \eqref{eq: h in the lower bound on the total variation distance} and
\eqref{eq: g in the lower bound on the total variation distance}, respectively,
are simplified to
\begin{eqnarray}
&& h_{\lambda}(\lambda, \lambda, \theta) = \frac{3 \lambda+7}{\theta \lambda} \, ,
\label{eq: simplified h for the corollary on the total variation distance} \\
&& g_{\lambda}(\lambda, \lambda, \theta)
= \lambda \, \max\bigl\{1, 2 e^{-\frac{3}{2}} + \theta e^{-1} \bigr\}.
\label{eq: simplified g for the corollary on the total variation distance}
\end{eqnarray}
This sub-optimal setting of $\alpha_1$ and $\alpha_2$ in \eqref{eq: proposed f for the
derivation of the improved lower bound on the total variation distance} implies that the
coefficient $K_1$ in \eqref{eq: K1 in the lower bound on the total variation distance}
is replaced with a loosened version
\begin{equation}
K'_1(\lambda) \triangleq \sup_{\theta > 0} \left(\frac{1-h_{\lambda}(\lambda,
\lambda, \theta)}{2 g_{\lambda}(\lambda, \lambda, \theta)}\right).
\label{eq: loosened coefficient for the corollary on the total variation distance}
\end{equation}
Let $\theta \geq e - \frac{2}{\sqrt{e}}$; then
\eqref{eq: simplified g for the corollary on the total variation distance} is simplified
to $g_{\lambda}(\lambda, \lambda, \theta)
= \lambda \, \bigl(2 e^{-\frac{3}{2}} + \theta e^{-1}\bigr)$. It therefore
follows from \eqref{eq: improved lower bound on the total variation distance},
\eqref{eq: K1 in the lower bound on the total variation distance}
and  \eqref{eq: simplified h for the corollary on the total variation distance}--\eqref{eq:
loosened coefficient for the corollary on the total variation distance} that
\begin{equation}
d_{\text{TV}}\bigl(P_W, \text{Po}(\lambda)\bigr) \geq \widetilde{K}_1(\lambda) \,
\sum_{i=1}^n p_i^2
\label{eq: loosened lower bound on the total variation distance - Corollary}
\end{equation}
where
\begin{equation}
\widetilde{K}_1(\lambda)
= \sup_{\theta \geq e - \frac{2}{\sqrt{e}}} \, \left( \frac{1-\frac{3\lambda+7}{\theta
\lambda}}{2\lambda \bigl(2 e^{-\frac{3}{2}} + \theta e^{-1}\bigr)} \right)
\label{eq: a possibly further loosened coefficient for the corollary on the total variation distance}
\end{equation}
and, in general, $K'_1(\lambda) \geq \widetilde{K}_1(\lambda)$ due to the
above restricted constraint on $\theta$ (see \eqref{eq: loosened coefficient
for the corollary on the total variation distance} versus
\eqref{eq: a possibly further loosened coefficient for the corollary on the
total variation distance}). Differentiating the function inside the
supremum w.r.t. $\theta$ and setting its derivative
to zero, one gets the following quadratic equation in $\theta$:
$$\lambda \, \theta^2 - 2(3\lambda+7) \, \theta - 2(3\lambda+7) e^{-1} = 0$$
whose positive solution is the optimized value of $\theta$
in \eqref{eq: optimal theta for alpha1 and alpha2 equal to lambda}.
Furthermore, it is clear that this value of $\theta$ in
\eqref{eq: optimal theta for alpha1 and alpha2 equal to lambda}
is larger than, e.g., 3, so it satisfies the constraint
in \eqref{eq: a possibly further loosened coefficient
for the corollary on the total variation distance}. This completes
the proof of Corollary~\ref{corollary: lower bound on the total variation distance}.

\section{A Comparison of the New Bounds with Known Results}
\label{section: Connection of the improved lower bounds with known results}

The new lower bounds on the total variation distance in
Theorem~\ref{theorem: improved lower bound on the total variation distance} and
Corollary~\ref{corollary: lower bound on the total
variation distance}
scale like $\sum_{i=1}^n p_i^2$, similarly to the known upper and lower
bounds in Theorem~\ref{theorem: bounds on the total variation distance -
Barbour and Hall 1984} that originally appear in Theorems~1 and 2 of Barbour and Hall (1984).
However, the new lower bounds offer a significant
improvement over the known lower bound in
Theorem~\ref{theorem: bounds on the total variation distance -
Barbour and Hall 1984}. More explicitly,
from Theorems~1 and 2 of Barbour and Hall (1984), the ratio between
the upper and lower bounds on the total variation distance (see
\eqref{eq: bounds on the total variation distance - Barbour and Hall 1984}) is equal to 32
in the two extreme cases where $\lambda \rightarrow 0$ or $\lambda \rightarrow \infty$. In
the following, we calculate the ratio of the same upper bound and the new lower bound in
Corollary~\ref{corollary: lower bound on the total variation distance} at these two
extreme cases. In the limit where $\lambda \rightarrow \infty$, this
ratio tends to
\begin{eqnarray}
&& \lim_{\lambda \rightarrow \infty} \frac{\left(\frac{1-e^{-\lambda}}{\lambda}\right)
\, \dsum_{i=1}^n p_i^2}{\left(\frac{1-\frac{3\lambda+7}{\lambda \theta}}{2 \lambda
\bigl(2 e^{-{3/2}}+\theta \, e^{-1}\bigr)} \right)\, \dsum_{i=1}^n p_i^2} \quad
\quad (\theta = \theta(\lambda) \;
\mbox{is given in Eq.~\eqref{eq: optimal theta for alpha1 and alpha2 equal to lambda}})
\nonumber \\[0.1cm]
%&& = 2 \lim_{\lambda \rightarrow \infty} \frac{2 e^{-{3/2}}+\theta \,
%e^{-1}}{1-\frac{3\lambda+7}{\lambda \theta}} \nonumber \\[0.1cm]
&& = \frac{2}{e} \, \lim_{\lambda \rightarrow \infty} \frac{\theta \bigl(2 e^{-{1/2}}+\theta
\bigr)}{\theta-\bigl(3+\frac{7}{\lambda}\bigr)} \nonumber \\[0.1cm]
&& = \frac{6}{e} \, \left(1+\sqrt{1+\frac{2}{3} \cdot e^{-1/2}} \right)^2 \approx 10.539
\label{eq: limit of the ratio between the upper and lower bounds on the total variation
distance when lambda tends to infinity}
\end{eqnarray}
where the last equality follows from \eqref{eq: optimal theta for alpha1 and alpha2 equal to lambda}, since
$\lim_{\lambda \rightarrow \infty} \theta = 3+\sqrt{3(3+2e^{-1/2})}$.
Furthermore, the limit of this ratio when $\lambda \rightarrow 0$ is equal to
\begin{eqnarray}
%&& \lim_{\lambda \rightarrow 0} \frac{\left(\frac{1-e^{-\lambda}}{\lambda}\right)
%\, \sum_{i=1}^n p_i^2}{\left(\frac{1-\frac{3\lambda+7}{\lambda \theta}}{2 \lambda
%\bigl(2 e^{-{3/2}}+\theta \, e^{-1}\bigr)} \right)\, \sum_{i=1}^n p_i^2} \quad
%\quad (\theta = \theta(\lambda) \;
%\mbox{is given in Eq.~\eqref{eq: optimal theta for alpha1 and alpha2 equal to lambda}})
%\nonumber \\[0.1cm]
&& 2 \, \lim_{\lambda \rightarrow 0} \left(\frac{1-e^{-\lambda}}{\lambda} \right) \,
\lim_{\lambda \rightarrow 0} \left(\frac{\lambda \bigl(2 e^{-{3/2}}+\theta \, e^{-1} \bigr)}{1-
\frac{3\lambda+7}{\lambda \theta}}\right) \nonumber \\[0.1cm]
%&& = 2 \,  \lim_{\lambda \rightarrow 0} \left(\frac{\lambda \theta \, (2 e^{-{3/2}}+\theta \,
%e^{-1})}{\theta-\bigl(3+\frac{7}{\lambda}\bigr)}\right) \nonumber \\[0.1cm]
&& \stackrel{\text{(a)}}{=} \frac{28}{e} \, \lim_{\lambda \rightarrow 0}
\left(\frac{2 e^{-{1/2}}+\theta)}{\theta-\bigl(3+\frac{7}{\lambda}\bigr)}\right) \nonumber \\
&& \stackrel{\text{(b)}}{=} \frac{56}{e} \approx 20.601
\label{eq: limit of the ratio between the upper and lower bounds on the total variation
distance when lambda tends to zero}
\end{eqnarray}
where equalities~(a) and (b) hold since, from
\eqref{eq: optimal theta for alpha1 and alpha2 equal to lambda}, it follows that
$\lim_{\lambda \rightarrow 0} (\lambda \theta) = 14$. This
implies that Corollary~\ref{corollary: lower bound on the total variation distance} improves
the original lower bound on the total variation distance in Theorem~2 of Barbour and Hall (1984)
by a factor of $\frac{32}{10.539} \approx 3.037$ in the limit where $\lambda \rightarrow \infty$,
and it also improves it by a factor of $\frac{32}{20.601} \approx 1.553$ if
$\lambda \rightarrow 0$ while still having a closed-form expression for the lower bound in
Corollary~\ref{corollary: lower bound on the total variation distance}. The only reason
for this improvement is related to the optimal choice of the free parameter $\theta$
in \eqref{eq: optimal theta for alpha1 and alpha2 equal to lambda},
versus its sub-optimal choice in the proof of Theorem~2 of Barbour and Hall (1984). This observation
has motivated to further improve the lower bound by introducing the
two additional parameters $\alpha_1, \alpha_2 \in \reals$ in
Theorem~\ref{theorem: improved lower bound on the total variation distance}; these parameters
give two additional degrees of freedom in the function $f$
in \eqref{eq: proposed f for the derivation of the improved lower bound on the total variation distance}
(according to the proof in
Section~\ref{subsubsection: Proof of the corollary with the improved lower bound on the total variation distance},
these two parameters are set to be equal to $\lambda$ for the derivation of the loosened and simplified bound in
Corollary~\ref{corollary: lower bound on the total variation distance}).
The improvement in the lower bound of
Theorem~\ref{theorem: improved lower bound on the total variation distance} (in comparison to
Corollary~\ref{corollary: lower bound on the total variation distance})
is especially dominant for low values of $\lambda$, as is shown in
Figure~\ref{Figure: ratio ot upper and lower bounds on the total variation distance}. Note, however, that no improvement is obtained for high values of $\lambda$ (e.g., for $\lambda \geq 20$, as is shown
by Figure~\ref{Figure: ratio ot upper and lower bounds on the total variation distance} on noticing that
the curves in this plot merge at large values of $\lambda$).
\begin{figure}[here!]  % Figure 1
\begin{center}
\epsfig{file=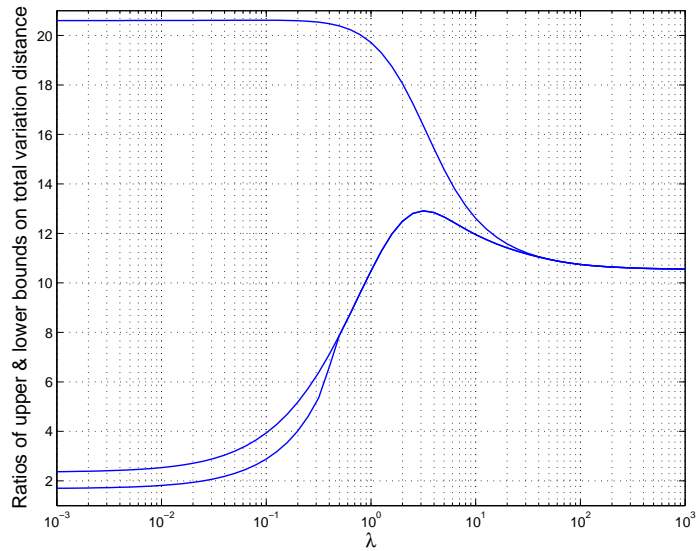,scale=0.53}
\end{center}
\caption{\label{Figure: ratio ot upper and lower bounds on the total variation distance} The
figure presents curves that correspond to ratios of upper and lower bounds on the total
variation distance between the sum of independent Bernoulli random variables and the Poisson
distribution with the same mean $\lambda$. The upper bound on the total variation distance
for all these three curves is the bound given by Barbour and Hall (see Theorem~1 of Barbour and Hall (1984)
or Theorem~\ref{theorem: bounds on the total variation distance - Barbour and Hall 1984}
here). The lower bounds that the three curves refer to are the following.
The curve at the bottom (i.e., the one which provides the lowest ratio for a fixed
$\lambda$) is the improved lower bound on the total variation distance that is
introduced in Theorem~\ref{theorem: improved lower bound on the total variation distance}.
The curve slightly above it for small values of $\lambda$ corresponds to the looser lower bound
obtained when $\alpha_1$ and $\alpha_2$ in
\eqref{eq: K1 in the lower bound on the total variation distance} are set to be equal
(i.e., $\alpha_1 = \alpha_2 \triangleq \alpha$ is their common value), and so the
optimization of $K_1$ for this curve is reduced to a two-parameter maximization of $K_1$
over the two free parameters $\alpha \in \reals$ and $\theta \in \reals^+$. Finally,
the curve at the top of this figure corresponds to the further loosening of this lower bound
where $\alpha$ is set to be equal to $\lambda$; this leads to a single-parameter maximization
of $K_1$ (over the parameter $\theta \in \reals^+$) whose optimization leads to the closed-form
expression for the lower bound in
Corollary~\ref{corollary: lower bound on the total variation distance}. For comparison,
in order to assess the enhanced tightness of the new lower bounds, note that the ratio of the
upper and lower bounds on the total variation distance from
Theorems~1 and 2 of Barbour and Hall (1984) (or
Theorem~\ref{theorem: bounds on the total variation distance - Barbour and Hall 1984} here)
is roughly equal to 32 for all values of $\lambda$.}
\end{figure}

The lower bound on the total variation distance in
Theorem~\ref{theorem: improved lower bound on the total variation distance} implies the bound in
Corollary~\ref{corollary: lower bound on the total variation distance} (see the proof in
Section~\ref{subsubsection: Proof of the corollary with the improved lower bound on the total variation distance}).
Corollary~\ref{corollary: lower bound on the total variation distance}
further implies the lower bound on the total variation distance in Theorem~2 of Barbour and Hall (1984) (see
Theorem~\ref{theorem: bounds on the total variation distance - Barbour and Hall 1984} here).
The latter claim follows from the
fact that the lower bound in \eqref{eq: loosened lower bound on the total variation distance - Corollary}
with the coefficient
$\widetilde{K}_1(\lambda)$ in \eqref{eq: a possibly further loosened coefficient for the corollary
on the total variation distance} was loosened in the proof of Theorem~2 of Barbour and Hall (1984) by a
sub-optimal selection of the parameter $\theta$, which leads to a lower bound on $\widetilde{K}_1(\lambda)$
(the sub-optimal selection of $\theta$ in the proof of
Theorem~2 of Barbour and Hall (1984) is $\theta = 21 \max\bigl\{1, \frac{1}{\lambda}\bigr\}$).
On the other hand, the optimized value of $\theta$ that is used
in \eqref{eq: optimal theta for alpha1 and alpha2 equal to lambda} provides an exact
closed-form expression for $\widetilde{K}_1(\lambda)$ in
\eqref{eq: a possibly further loosened coefficient for the corollary on the total variation distance},
and it leads to the derivation of the improved lower bound in
Corollary~\ref{corollary: lower bound on the total variation distance}.

Theorem~1.2 of Deheuvels and Pfeifer (1986) provides an asymptotic result for the
total variation distance between the distribution of the sum $W$ of $n$ independent
Bernoulli random variables with $\expectation(X_i) = p_i$ and the Poisson
distribution with mean $\lambda = \sum_{i=1}^n p_i$. It shows that when
$\sum_{i=1}^n p_i \rightarrow \infty$ and
$\max_{1 \leq i \leq n} p_i \rightarrow 0$ as $n \rightarrow \infty$ then
\begin{equation}
d_{\text{TV}}(P_W, \text{Po}(\lambda)) \sim \frac{1}{\sqrt{2 \pi e} \;
\lambda} \; \sum_{i=1}^n p_i^2 \, .
\label{eq: asymptotic expression for the total variation distance when lambda
tends to infinity and p_max to zero}
\end{equation}
This implies that the ratio of the upper bound on the total variation distance
in Theorem~1 of Barbour and Hall (1984) (see Theorems~\ref{theorem: bounds on the
total variation distance - Barbour and Hall 1984} here) and this asymptotic
expression is equal to $\sqrt{2 \pi e} \approx 4.133$.
Therefore, the ratio between the exact
asymptotic value in \eqref{eq: asymptotic expression for the total variation
distance when lambda tends to infinity and p_max to zero} and the new lower
bound in \eqref{eq: improved lower bound on the total variation distance}
is equal to $\frac{10.539}{\sqrt{2 \pi e}} \approx 2.55$. It therefore follows
that, in the limit where $\lambda \rightarrow 0$,
the new lower bound on the total variation in
\eqref{eq: improved lower bound on the total variation distance} is smaller
than the exact value by no more than 1.69, and for $\lambda \gg 1$, it is
smaller than the exact asymptotic result by a factor of 2.55.

\vspace*{0.2cm}
{\em Acknowledgment}:
The anonymous reviewer is acknowledged for suggestions that
led to improvement of the presentation of this paper.
This research work was supported by the Israeli Science Foundation (ISF), grant number 12/12.

\section*{References}
{\footnotesize \hspace*{-0.45cm}
Arratia, R., Goldstein, L., Gordon, L., 1990. Poisson approximation
and the Chen-Stein method. Statistical Science~5, 403--424.\\
Barbour, A.D., Chen, L. H. Y., 2005. An Introduction to Stein's
Method. Lecture Notes Series, Institute for Mathematical Sciences,
Singapore University Press and World Scientific.\\
Barbour, A.D., Hall, P., 1984. On the rate of Poisson convergence.
Mathematical Proceedings of the Cambridge Philosophical Society~95, 473--480.\\
Barbour, A.D., Holst, L., Janson, S., 1992. Poisson Approximation.
Oxford University Press.\\
Barbour, A. D., Johnson, O., Kontoyiannis, I., Madiman, M., 2010.
Compound Poisson approximation via information functionals.
Electronic Journal of Probability~15, 1344--1369.\\
Chen, L.H.Y., 1975. Poisson approximation for dependent trials.
Annals of Probability~3, 534--545.\\
Deheuvels, P., Pfeifer, D., 1986. A semigroup approach to Poisson
approximation. Annals of Probability~14, 663--676.\\
Kontoyiannis, I., Harremo{\"{e}}s, P., Johnson, O., 2005. Entropy and the
law of small numbers. IEEE Trans. on Information Theory~51, 466--472.\\
Le Cam, L., 1960. An approximation theorem for the Poisson binomial
distribution. Pacific Journal of Mathematics~10, 1181--1197.\\
Ross, S.M., Pek{\"{o}}z, E.A. 2007. A Second Course in Probability.
Probability Bookstore.\\
Ross, N., 2011. Fundamentals of Stein's Method. Probability Surveys~8,
210--293.\\
Sason, I., 2012. Entropy bounds for discrete random variables via coupling.
[Online]. Available: \url{http://arxiv.org/abs/1209.5259}.
}
\end{document}